\spnewtheorem{remarks}{Remarks}{\bf}{\rm}
\spnewtheorem{remark}{Remark}{\bf}{\rm}
\numberwithin{theorem}{section}
\numberwithin{proposition}{section}
\numberwithin{corollary}{section}
\numberwithin{lemma}{section}
\numberwithin{definition}{section}
\numberwithin{remark}{section}
\numberwithin{remarks}{section}
\numberwithin{equation}{section}
\newcommand{\R}{\mathbb{R}}
\newcommand{\N}{\mathbb{N}}
\def\(({\left(}
\def\)){\right)}
\newcommand{\be}{\begin{equation}}
\newcommand{\ee}{\end{equation}}
\newcommand{\bea}{\begin{eqnarray}}
\newcommand{\eea}{\end{eqnarray}}
\newcommand{\chiSG}{\chi_{SG}}
\newcommand{\chiF}{\chi_F}
\newcounter{numcount}
\newcommand{\labelnummer}{\mbox{(\roman{numcount})}}%
        {\let\curlabelspeicher\@currentlabel%
         \begin{list}{\labelnummer}{\usecounter{numcount}%
    \topsep1ex\partopsep2ex\parsep0pt\itemsep1ex\@plus1\p@%
                      \labelwidth2.5em\itemindent0em\labelsep1em%
                      \leftmargin3em}%
         \let\saveitem\item%
         \def\item{\saveitem%
                   \def\@currentlabel{\curlabelspeicher\labelnummer}%
                   \let\label\bemlabel}}%
       {\end{list}}%
\newenvironment{indentnummer*}%
        {\begin{list}{\labelnummer}{\usecounter{numcount}%
   \topsep1ex\partopsep2ex\parsep0pt\itemsep1ex
  \labelwidth2.5em\itemindent0em\labelsep1em%
       \leftmargin3.5em
                      }}%
       {\end{list}}%
        {\let\curlabelspeicher\@currentlabel%
   \begin{list}{\labelnummer}{\usecounter{numcount}\leftmargin0em%
\topsep1ex\partopsep2ex\parsep0pt\itemsep0.5ex
       \labelwidth2.5em\itemindent3em\labelsep1em}%
         \let\saveitem\item%
         \def\item{\saveitem%
        \def\@currentlabel{\curlabelspeicher\labelnummer}%
                   \let\label\bemlabel}}%
       {\end{list}}%
\def\itemref#1{\expandafter\@setref\csname r@#1item\endcsname\@firstoftwo{#1}}%
\def\bemlabel#1{\@bsphack%
      \protected@write\@auxout{}%
             {\string\newlabel{#1}{{\@currentlabel}{\thepage}}}%
      \ifmmode\else%
      \protected@write\@auxout{}%
     {\string\newlabel{#1item}{{\labelnummer}{\thepage}}}%
      \fi%
      \@esphack}%
\begin{document}

\title{No spin glass phase in ferromagnetic random-field random-temperature scalar Ginzburg-Landau model}
\titlerunning{No spin glass phase in ferromagnetic random-field Ginzburg-Landau model}

\author{Florent Krzakala$^{1,2}$, Federico Ricci-Tersenghi$^{3}$,
  David Sherrington$^{2,4}$ and Lenka Zdeborov\'a$^{2}$}
\authorrunning{Krzakala, Ricci-Tersenghi, Sherrington and Zdeborov\'a}

\institute{$^1$ CNRS and ESPCI ParisTech, 10 rue Vauquelin, UMR
  7083 Gulliver, Paris 75000 France\\
  $^2$ Theoretical Division and Center for Nonlinear Studies, Los
  Alamos National Laboratory, NM 87545 USA \\
  $^3$ Dipartimento di Fisica, INFN -- Sezione di Roma 1, 
  CNR -- IPCF, UOS di Roma, Universit\`a di Roma ``La Sapienza'',
  P.~le Aldo Moro 2, I-00185 Roma, Italy\\
  $^4$ Rudolf Peierls Centre for Theoretical Physics, University of Oxford,
  1 Keble Rd., Oxford OX1 3PN, United Kingdom}

\date{\today}

\maketitle

\begin{abstract}
  Krzakala, Ricci-Tersenghi and Zdeborov\'a have shown recently that
  the random field Ising model with non-negative interactions and arbitrary
  external magnetic field on an arbitrary lattice does not have a static spin
  glass phase. In this paper we generalize the proof to a soft scalar spin 
  version of the Ising model: the Ginzburg-Landau model with random magnetic
  field and random temperature-parameter. We do so by proving that the
  spin glass susceptibility cannot diverge unless the ferromagnetic
  susceptibility does.
\end{abstract}


\section{Introduction}

A widely studied class of disordered systems in statistical physics
consist in adding random impurities to a field coupled with the order
parameter. A textbook example of such a system is the random field Ising model (RFIM), introduced by \cite{Larkin},  that has been a very useful playground for theoretical ideas. 
The Hamiltonian of the standard RFIM reads
\be
{\cal H} = - \sum_{i<j} J_{ij} S_i S_j + \sum_i h_i S_i \; ,
\label{Hamiltonian}
\ee 
where all the non-zero interactions are ferromagnetic, i.e. $J_{ij} \ge 0$. The $N$ Ising spins $S_i=\pm1$, $i=1,\dots,N$, are placed at the
vertices of a graph (lattice), and $h_i$ are quenched random
magnetic fields. The fact that all the interactions $J_{ij}$ are non-negative is fundamental, it means that in the absence of the fields there is no explicit frustration in the problem.

The case where the graph of interactions is a finite-dimensional lattice and where the fields are taken from a Gaussian distribution with zero mean and a
variance $H_R$ has received a lot of attention. Of particular interest
is the phase diagram in the $T$-$H_R$ plane, where $T$ is the
temperature. Several authors have suggested, based on non-rigorous field
theoretic arguments, that there exists an equilibrium spin glass phase
in the three-dimensional RFIM, that is a phase with a random frozen ordering
\cite{intermediate,intermediate2,CyranoBrezin,CyranoBrezin2,CyranoBrezin3}.
These suggestions were disproved rigorously in \cite{KrzakalaRicci10}
for the RFIM defined by Hamiltonian (\ref{Hamiltonian}). In particular
\cite{KrzakalaRicci10} showed that for the RFIM (\ref{Hamiltonian}) a special case of the Fortuin-Kasteleyn-Ginibre (FKG) inequality \cite{FKG} implies that the spin glass susceptibility is upper-bounded by the ferromagnetic
susceptibility. Since the spin glass susceptibility diverges in the
whole spin glass phase, a spin glass phase can not exist away from the
ferromagnetic critical point/line in the RFIM.

The field theoretic approach of
\cite{intermediate,intermediate2,CyranoBrezin,CyranoBrezin2,CyranoBrezin3},
however, was not formulated with the Ising spin Hamiltonian
(\ref{Hamiltonian}) but instead with the soft-spin
description of the random field model. This is the well-known
Ginzburg-Landau model (or the so-called
$\phi^4$-theory) which is defined by the following Hamiltonian
\be
{\cal H}_N = - \sum_{ij} J_{ij} \phi_i \phi_j - \sum_i h_i \phi_i
+ \sum_i r_i \phi_i^2 + \sum_i u_i \phi_i^4 \; ,
\label{Ham_LG}
\ee
where $\phi_i$ are now real numbers, $\phi_i \in \R$, and the interactions are ferromagnetic, $J_{ij} \ge 0$ (this will be the case in the whole article).

The generalized model (\ref{Ham_LG}) includes several special cases. 
The Ising model is recovered in the limit where $r_i=-2u_i$ and $u_i\to \infty$.
The most common {\it random field model} is obtained when $h_i$ are random variables while $r_i=r$ and $u_i=u$ are fixed, and $J_{ij}=0$ or $J_{ij}=1$ depending if the spins $ij$ interact or not. Another version that was considered in the literature, the {\it random temperature model}, is when $r_i$ are random while $h_i=0$, $u_i=u$ and $J_{ij}\in \{0,1\}$. The
existence of a spin-glass phase was also
predicted in the random temperature model \cite{MaRudnick,Tarjus02}, based again on some non-rigorous arguments using
perturbation theory; this result was, however, questioned in \cite{Sherrington90}.

Our results work even for the slightly more general Hamiltonian
\be
{\cal H}_N = - \sum_{ij} J_{ij} \phi_i \phi_j + \sum_i f_i(\phi_i) \; ,\label{Ham_gen}
\ee
where $J_{ij} \ge 0$ and the local constraining potentials $f_i()$ are arbitrary analytic functions, but for the requirement that the partition function
\be
Z_{N} = \int_{-\infty}^\infty \prod_{i=1}^N {\rm d}\phi_i\;
e^{-\beta {\cal H}_{N}(\{\phi_i\}) } \label{Z_N} \; .
\ee
must exists for any $N \in \N$.
This is the most minimalist requirement, since the non-convergence of the integral in (\ref{Z_N}) would make the Gibbs-Boltzmann measure ill defined and the model would not be a physical one.

The Gibbs-Boltzmann average at temperature $T=\beta^{-1}$ is defined by
\be
\langle A\rangle^{(N)} = \frac{1}{Z_N} \int_{-\infty}^\infty \prod_{i=1}^N {\rm d}\phi_i A\, e^{-\beta {\cal H}_{N}(\{\phi_i\}) } \; ,
\ee
The superscript $(N)$ on the angular brackets will be written explicitly only when the size dependence is crucial, while the temperature dependence is always made implicit.
Connected correlation functions are defined as
\be
\langle A\,B\rangle_c = \langle A\,B\rangle - \langle A\rangle
\langle B\rangle \; .
\ee

It is worth noticing that the convergence of the integral in (\ref{Z_N}) ensures that the partition function $Z_N$ is an analytic function of the coupling constants $J_{ij}$ for any finite value of $N$. Then the derivative
\be
\beta^{-1} \frac{\partial \ln Z_N}{\partial J_{ij}} =
\langle \phi_i \phi_j \rangle
\ee
exists as well for any pair of indices $i,j$, and this implies that single-variable marginal probability distributions have the first and the second moment, $\langle \phi_i \rangle$ and $\langle \phi_i^2 \rangle$.
Actually in soft-spin models used in the literature, such as the spherical model and the $\phi^4$ model, single-variable marginal probabilities decay exponentially fast for large values of $\phi_i$, and so all the moments $\langle \phi_i^k \rangle$ exist. However, our proof only requires the first two moments to exist.

Note also that any type of lattice can be encoded in model (\ref{Ham_gen}) by setting $J_{ij}=0$ if spins $i$ and $j$ do not interact. 

The main contribution of this paper is a rigorous proof that the soft-spin random-field random-temperature model defined by (\ref{Ham_gen}) does not have a spin glass phase as long as the interactions are ferromagnetic (non-negative). This generalizes the result of \cite{KrzakalaRicci10}. 

\section{Definitions of susceptibilities}

We define the ferromagnetic and the spin glass phases using the
properties of the ferromagnetic and spin glass susceptibilities. 

The order parameter that characterizes a ferromagnetic transition is
the magnetization $m= \sum_i \langle \phi_i\rangle/N $. However, a non-zero
magnetization does not imply a ferromagnetic phase. Indeed, $m>0$ even at
large temperatures when a uniform positive external magnetic field is applied. A convenient way to characterize the ferromagnetic phase is to
define the ferromagnetic susceptibility as
\be
\chi^0_F(N) = \frac{1}{N} \sum_{ij} \langle \delta\phi_i\, \delta\phi_j\rangle \; ,
\ee
where
\be
\delta\phi_i =
	\frac{\phi_i - \langle \phi_i \rangle}
	{\sqrt{\langle \phi_i^2 \rangle - \langle \phi_i \rangle^2}}\;,
\label{fluct}
\ee
are the fluctuations with respect to the average values, normalized by the variances.

In the thermodynamic limit ($N \to \infty$), $\chi^0_F(\infty)$ is finite in the high temperature ($T > T_c$) paramagnetic phase and it diverges approaching the ferromagnetic critical point from above ($T \searrow T_c$).
Right at the critical point ($T=T_c$), $\chi^0_F(N)$ diverges with $N\to \infty$ signaling that the system is critical, i.e.\ has long range correlations between fluctuations of its variables. Unfortunately, the ferromagnetic susceptibility $\chi^0_F(N)$ diverges with $N$ also in the whole low temperature ($T<T_c$) ferromagnetic phase: however this divergence is not due to criticality (i.e.\ long range correlation of fluctuations), but only because below $T_c$ two ferromagnetic states coexist\footnote{In the presence of two or more equivalent states, an appropriately chosen perturbation, although of infinitesimal strength, may induce a macroscopic change of state, thus leading to an infinite susceptibility.}.

Given that we are interested in finding critical points and critical lines where a phase transition takes place, we would like to measure an observable that diverges only at criticality, and so we consider the following {\em ferromagnetic susceptibility}
\be
\chiF = \lim_{h \searrow 0} \lim_{N \to \infty} \chiF(h,N)
	  = \lim_{h \searrow 0} \lim_{N \to \infty} \frac{1}{N}
	  \sum_{ij} \langle \delta\phi_i\, \delta\phi_j \rangle \, ,
\label{chi_ferro}
\ee
where $h$ is an auxiliary uniform magnetic field (in practice one needs to add a term $-h \sum_i \phi_i$ in the Hamiltonian).
Due to the order of the limits in (\ref{chi_ferro}), below $T_c$, the infinitesimal external field $h$ makes the two ferromagnetic states no longer equivalent, and consequently $\chiF$ is finite everywhere, but at the critical point $T_c$ (which is indeed defined as the point where $\chiF$ diverges).

In general to define a susceptibility that diverges only when a critical state is present one should explicitly break (by adding infinitesimal perturbations) all the symmetries of the Hamiltonian. In our case, the Hamiltonian (\ref{Ham_gen}) is very general, but the first term is invariant under the transformation $\phi_i \to -\phi_i\; \forall i$. In case the potentials too are invariant under such a transformation, $f_i(\phi)=f_i(-\phi)$, then the infinitesimal auxiliary uniform field in (\ref{chi_ferro}) is strictly required.

The spin glass phase is characterized by a freezing of spins in
random directions \cite{EA}, hence the {\em spin glass susceptibility} is defined as
\be
\chiSG = \lim_{h \searrow 0} \lim_{N \to \infty} \chiSG(h,N)
	   = \lim_{h \searrow 0} \lim_{N \to \infty} \frac{1}{N}
	   \sum_{ij} \langle \delta\phi_i\, \delta\phi_j \rangle^2 \,.
\label{chi_SG}
\ee
Again we use the infinitesimal auxiliary external field to break the $\phi \to -\phi$ symmetry, if present.
The susceptibility $\chiSG$ is related closely to what is measured in simulations and experiments \cite{HERTZ}, and it is predicted to diverge at the critical point in spin glass theories such as replica symmetry breaking \cite{MF}, or the droplet description \cite{DROPLET}.
In a spin glass phase $\chiSG$ is infinite, because of the presence of at least two states\footnote{The number of states depends on the model and for some models, like the 3D Edwards-Anderson model, it is still a matter of debate.} related by symmetries, which are not broken by the auxiliary field.
For this reason we can define that a system is in a spin glass phase {\it if and only if} the ferromagnetic susceptibility (\ref{chi_ferro}) is finite, while the spin glass susceptibility (\ref{chi_SG}) is infinite.

More precisely the computation of these two susceptibilities must proceed by first taking the thermodynamic limit in the presence of the external field, $\chiF(h,\infty)$ and $\chiSG(h,\infty)$, and then studying the limit $h \searrow 0$ of these two functions. If such a limit exists, then we say that the susceptibility is finite and we are away from the critical point, while if a divergence is found while decreasing $h$, then we say that the susceptibility is infinite.

In the next Section we prove that $\chiSG(h,N) \le \chiF(h,N)$, for any value of $h$ and $N$, thus excluding the possibility of a spin glass phase (defined by $\chiF<\infty$ and $\chiSG \to \infty$) in the model (\ref{Ham_gen}) in the absence of explicit frustration in the couplings.

\section{Results}

We start by proving a generalization of the 2nd Griffith's inequality \cite{Griffiths67b}. The following Lemma is also a consequence of much more general FKG inequalities \cite{FKG,FKG_gen}, we, however, find useful to present an independent and more elementary proof. 

\begin{lemma}
In the model defined by the Hamiltonian (\ref{Ham_gen}) with non-negative couplings, $J_{ij} \ge 0 \; \forall i,j$, under the Gibbs-Boltzmann measure $e^{-\beta {\cal H}_N}/Z_N$ the correlation between fluctuations of any two variables is non-negative and bounded by 1,
\be
0 \le \langle \delta\phi_i\, \delta\phi_j \rangle \le 1 \qquad \forall i,j\;. 
\label{eq_lemma_FKG}
\ee
\label{lemma_FKG}
\end{lemma}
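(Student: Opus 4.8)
My plan is to prove the two inequalities in (\ref{eq_lemma_FKG}) separately. The upper bound $\langle\delta\phi_i\,\delta\phi_j\rangle\le 1$ is immediate: by construction the normalized fluctuations satisfy $\langle\delta\phi_i\rangle=0$ and $\langle\delta\phi_i^2\rangle=1$, so $\langle\delta\phi_i\,\delta\phi_j\rangle$ is a normalized correlation and the Cauchy--Schwarz inequality for the Gibbs average gives $\langle\delta\phi_i\,\delta\phi_j\rangle\le\sqrt{\langle\delta\phi_i^2\rangle\,\langle\delta\phi_j^2\rangle}=1$. Since the normalizing variances $\langle\phi_k^2\rangle-\langle\phi_k\rangle^2$ are strictly positive, the lower bound $\langle\delta\phi_i\,\delta\phi_j\rangle\ge 0$ is equivalent to the non-negativity of the connected correlation $\langle\phi_i\phi_j\rangle_c\ge 0$, which carries the real content (a second-Griffiths-type statement). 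Note that Cauchy--Schwarz alone would only yield $\ge-1$, so a genuinely different argument using $J_{ij}\ge0$ is needed for the lower bound.

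For the lower bound I would use the duplicate-variable (two-replica) method. Introduce an independent copy $\phi'$ distributed as $\phi$ under the product Gibbs measure; then $\langle\phi_i\rangle\langle\phi_j\rangle=\langle\phi_i\phi_j'\rangle$ and a short computation gives the symmetric identity $\langle\phi_i\phi_j\rangle_c=\tfrac12\,\langle(\phi_i-\phi_i')(\phi_j-\phi_j')\rangle$, the average now taken in the two-copy measure. Next I change variables to the antisymmetric and symmetric combinations $a_k=\phi_k-\phi_k'$ and $b_k=\phi_k+\phi_k'$. Using $\phi_i\phi_j+\phi_i'\phi_j'=\tfrac12(a_ia_j+b_ib_j)$, the coupling term of the two-copy Hamiltonian splits into decoupled ferromagnetic $a$--$a$ and $b$--$b$ interactions with the same non-negative $J_{ij}$, while the two potentials combine into $g_k(a_k,b_k)=f_k(\tfrac{b_k+a_k}{2})+f_k(\tfrac{b_k-a_k}{2})$. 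The crucial feature is that $g_k$ is even in $a_k$, because flipping $a_k\to-a_k$ merely exchanges the two replicas at site $k$. Thus the goal reduces to showing $\langle a_ia_j\rangle\ge 0$ in this two-copy measure.

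The final step eliminates the continuous variables in favour of a field-free ferromagnetic Ising model. I decompose each $a_k=\sigma_k r_k$ into its sign $\sigma_k=\sign(a_k)\in\{\pm1\}$ and magnitude $r_k=|a_k|\ge 0$. Since $g_k$ is even in $a_k$, the single-site weights $e^{-\beta g_k(a_k,b_k)}$ depend only on $r_k$, not on $\sigma_k$; hence, conditioning on all magnitudes $r$ and on $b$, the signs $\sigma$ are distributed as $\propto\exp[\tfrac{\beta}{2}\sum_{ij}J_{ij}\,r_ir_j\,\sigma_i\sigma_j]$, a zero-field Ising model with non-negative effective couplings $\tilde J_{ij}=\tfrac{\beta}{2}J_{ij}r_ir_j\ge 0$. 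For such a model the first Griffiths inequality gives $\langle\sigma_i\sigma_j\rangle\ge 0$, and this is elementary: expanding $\exp[\sum\tilde J_{ij}\sigma_i\sigma_j]$ in powers and summing over $\sigma\in\{\pm1\}^N$ leaves only terms with non-negative coefficients, since a monomial survives the spin sum only when every spin appears to an even power, contributing $+2^N$. Averaging this inequality, weighted by the non-negative factor $r_ir_j$, over the magnitudes and over $b$ yields $\langle a_ia_j\rangle\ge 0$, and therefore $\langle\phi_i\phi_j\rangle_c\ge 0$.

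I expect the main obstacle to be conceptual rather than computational: ensuring that the ferromagnetic (non-negative coupling) structure and, above all, the absence of any effective field on the sign variables survive the two-replica change of variables. The evenness of $g_k$ in $a_k$ is exactly what guarantees that the reduced Ising model has no magnetic field, so that the weaker first Griffiths inequality suffices; were an effective field present one would instead be forced to invoke the full second Griffiths/FKG inequality. The remaining points — justifying the conditioning on $b$ and $r$ by Fubini, and discarding the harmless measure-zero set where some $r_k=0$ and $\sigma_k$ is undefined — are routine given that the partition function (\ref{Z_N}) converges and the second moments $\langle\phi_i^2\rangle$ exist, because the magnitude integrals are precisely the original convergent integrals and the spin sums are finite, so no delicate interchange of summation and integration arises.
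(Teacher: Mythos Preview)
Your proof is correct, but for the lower bound you take a genuinely different route from the paper. (The upper bound is essentially the same: the paper expands $\langle(\delta\phi_i-\delta\phi_j)^2\rangle\ge 0$, which is Cauchy--Schwarz in disguise.)

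For $\langle\phi_i\phi_j\rangle_c\ge 0$ the paper proceeds by \emph{induction on the number of spins}. One adds a new spin $\phi_{N+1}$, writes every $(N{+}1)$-spin correlation as an average over the marginal $P(x)$ of $\phi_{N+1}$ of conditional $N$-spin quantities $\langle\cdot\rangle_x^{(N)}$, and then uses the inductive hypothesis to show that $x\mapsto\langle\phi_i\rangle_x^{(N)}$ is nondecreasing (its derivative is $\beta\sum_j J_{N+1,j}\langle\phi_i\phi_j\rangle_{x,c}^{(N)}\ge 0$). The positivity of the two relevant correlators then follows from the elementary one-dimensional fact that a zero-mean nondecreasing function integrated against a nondecreasing function is nonnegative. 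Your argument is instead Ginibre's duplicate-variable method: pass to $a_k=\phi_k-\phi_k'$, $b_k=\phi_k+\phi_k'$, exploit the evenness of the effective single-site potential in $a_k$, and reduce to the \emph{first} Griffiths inequality for a zero-field ferromagnetic Ising model on the signs $\sigma_k$, conditioned on $(b,r)$.

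Both proofs are classical; what they buy is slightly different. The paper's induction makes the monotone-covariance mechanism explicit at every step and stays with a single copy of the system, which fits the spirit of an ``elementary proof independent of FKG.'' Your approach avoids induction altogether and makes transparent why no second-Griffiths/FKG machinery is needed in the end: the replica symmetry forces the effective Ising model to be field-free, so the weaker first Griffiths inequality suffices. One minor wording point: the $a$- and $b$-sectors are decoupled only in the pair-interaction term, not in the single-site potentials $g_k(a_k,b_k)$; your argument does not actually need full decoupling, since you correctly condition on $b$ (and $r$) before analysing the $\sigma$-distribution.
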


\begin{proof}[Lemma \ref{lemma_FKG}]
Let us prove first the second inequality in (\ref{eq_lemma_FKG}).
From the definition (\ref{fluct}) of the relative fluctuations we have that $\langle \delta\phi_i^2 \rangle = 1$ for any $i$.
Moreover for any pair of indices $i,j$ we have that
\be
0 \le \langle (\delta\phi_i - \delta\phi_j)^2 \rangle =
\langle \delta\phi_i^2 \rangle + \langle \delta\phi_j^2 \rangle
- 2 \langle \delta\phi_i\, \delta\phi_j \rangle =
2 (1 - \langle \delta\phi_i\, \delta\phi_j \rangle)
\ee
from which $\langle \delta\phi_i\, \delta\phi_j \rangle \le 1$ follows.

In order to prove the first inequality in (\ref{eq_lemma_FKG}) we notice that it is equivalent to the inequality
\be
\langle \phi_i \phi_j \rangle_c \ge 0\;,
\label{simpler}
\ee
thanks to the fact that all denominators in the definition (\ref{fluct}) of $\delta\phi_i$ are positive and can be canceled without changing the sign of the correlation.

Then we prove (\ref{simpler}), by induction in the system size. In a system of $N=1$ spin 
\be
\langle \phi_1^2 \rangle^{(1)}_c \ge 0 \, ,
\ee 
since the variance is always non-negative. 
Then we assume the property to hold in a system of $N$ spins and we consider a system with $N+1$ spins. The Hamiltonian of that system is related to the $N$-spin system as
\be
      {\cal H}_{N+1}   = {\cal H}_{N} - \sum_{i=1}^N J_{N+1,i} \phi_{N+1} \phi_i +f_{N+1}(\phi_{N+1})  \, .
\ee
We denote 
\bea
     P(x) = \frac{1}{Z_{N+1}}   \int_{-\infty}^\infty \prod_{i=1}^N
     {\rm d}\phi_i  \hspace{7cm} \\ \exp{\left[-\beta
         {\cal H}_{N}(\{\phi_i\}) + \beta \sum_{i=1}^N J_{N+1,i} x \phi_i
         - \beta f_{N+1}(x) \right]}\, . \nonumber
\eea
Let us denote the thermodynamic average in a modified external magnetic field as
\be
\langle A \rangle_x^{(N)}    =  \frac{\int \prod_{i=1}^N {\rm d}\phi_i   A \, e^{-\beta {\cal H}_{N}(\{\phi_i\}) + \beta \sum_{i=1}^N J_{N+1,i} x \phi_i   }}{\int \prod_{i=1}^N {\rm d}\phi_i \,   e^{-\beta {\cal H}_{N}(\{\phi_i\}) + \beta \sum_{i=1}^N J_{N+1,i} x \phi_i  }}\, .
\ee
The connected correlation between spins $\phi_{N+1}$ and $\phi_i$ in the $N+1$-spin system can then be rewritten as
\bea
     \langle \phi_{N+1} \phi_i \rangle_c^{(N+1)} &=& \int_{-\infty}^\infty  {\rm d}x  \,  x \, P(x)   \langle \phi_i \rangle_x^{(N)}   -   \int_{-\infty}^\infty  {\rm d}y \,   y \, P(y)  \int_{-\infty}^\infty  {\rm d}x\,  P(x)   \langle \phi_i \rangle_x^{(N)} \nonumber \\  & =&   \int_{-\infty}^\infty  {\rm d}x  \,  \left[ x -  \int_{-\infty}^\infty  {\rm d}y \,   y \, P(y)   \right]\, P(x)   \langle \phi_i \rangle_x^{(N)} \, .
\eea
We can then use the following inequality: For any real non-decreasing function $g(x)$ such that 
\be
     \int_{-\infty}^\infty  {\rm d}x  \, g(x) = 0\, ,
\ee 
and any non-decreasing function $f(x)$ one has
\be
     \int_{-\infty}^\infty  {\rm d}x \,  g(x) \, f(x) \ge 0 \, . \label{lemma2}
\ee
Proof of this statement is elementary, function $g(x)$ has to be non-positive for some $x \le x_0$ and non-negative for $x \ge x_0$. Since $f(x)$ is non-decreasing one has 
\be
    \int_{-\infty}^{x_0}  {\rm d}x \,  |g(x)| \, f(x) \le   \int_{x_0}^\infty  {\rm d}x \,  |g(x)| \, f(x) \, ,
\ee
from which (\ref{lemma2}) follows.
We observe that $\langle \phi_i \rangle_x^{(N)}$ is a non-decreasing function of $x$. Indeed
\be
        \frac{{\rm d} \langle \phi_i \rangle_x^{(N)}}{{\rm d}x} = \beta \sum_{j=1}^N J_{N+1,j}  \langle \phi_j \phi_i \rangle_c^{(N)} \ge 0\, ,
\ee
where the last inequality follows from the inductive assumption and since $J_{N+1,j}\ge 0$.
And that
\be
     \int_{-\infty}^\infty  {\rm d}x  \,  \left[ x -  \int_{-\infty}^\infty  {\rm d}y \,   y \, P(y)   \right]\, P(x)  =0\, .
\ee
Hence (\ref{lemma2}) implies
\be
   \langle \phi_{N+1} \phi_i \rangle_c^{(N+1)} \ge 0\, .
\ee 

We proceed similarly for the correlation function between two spins that were already present in the $N$-spin system
\bea
    \langle \phi_i \phi_j \rangle_c^{(N+1)}   &=&
    \int_{-\infty}^\infty  {\rm d}x  \, P(x)   \langle \phi_i
    \phi_j \rangle_x^{(N)} -  \int_{-\infty}^\infty  {\rm d}x  \,
    P(x)   \langle \phi_i \rangle_x^{(N)} \int_{-\infty}^\infty
    {\rm d}y  \, P(y)   \langle \phi_j \rangle_y^{(N)} \nonumber \\ &=& 
\int_{-\infty}^\infty  {\rm d}x  \, P(x)   \langle \phi_i \phi_j \rangle_{x,c}^{(N)}  + \nonumber \\ && \int_{-\infty}^\infty  {\rm d}x  \, P(x)   \langle \phi_i \rangle_x^{(N)}\left[  \langle \phi_j \rangle_x^{(N)}   -  \int_{-\infty}^\infty  {\rm d}y  \, P(y)   \langle \phi_j \rangle_y^{(N)} \right]
\eea
where the first term is non-negative by the inductive assumption, and the second term is non-negative according to (\ref{lemma2}), because $\langle \phi_i \rangle_x^{(N)}$ and $\langle \phi_j \rangle_x^{(N)}$ are non-decreasing functions of $x$ and 
\be
    \int_{-\infty}^\infty  {\rm d}x  \, P(x)  \left[  \langle \phi_j \rangle_x^{(N)}   -  \int_{-\infty}^\infty  {\rm d}y  \, P(y)   \langle \phi_j \rangle_y^{(N)} \right] = 0\, .
\ee
Hence
\be
   \langle \phi_i \phi_j \rangle_c^{(N+1)}   \ge 0.
\ee 
This concludes the proof of Lemma \ref{lemma_FKG}. \qed
\end{proof}

Based on the previous Lemma we can now easily state the main result of the present paper.

\begin{theorem}
In the model defined by the Hamiltonian (\ref{Ham_gen}) with non-negative couplings, $J_{ij} \ge 0 \; \forall i,j$, under the Gibbs-Boltzmann measure $e^{-\beta {\cal H}_N}/Z_N$, the spin glass susceptibility $\chiSG(h,N)$ is always upper-bounded by the ferromagnetic susceptibility $\chiF(h,N)$. Consequently the model does not posses a thermodynamic spin glass phase.
\label{main_lemma}
\end{theorem}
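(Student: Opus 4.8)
The plan is to reduce the whole statement to the termwise comparison that Lemma~\ref{lemma_FKG} makes available. Write $c_{ij} = \langle \delta\phi_i\, \delta\phi_j \rangle$ for the normalized fluctuation correlation, computed at finite size $N$ and in the presence of the auxiliary field $h$. With this notation the two observables are simply $\chiF(h,N) = \frac{1}{N}\sum_{ij} c_{ij}$ and $\chiSG(h,N) = \frac{1}{N}\sum_{ij} c_{ij}^2$, so the theorem asks me to compare a sum against the sum of squares of the same quantities. Lemma~\ref{lemma_FKG} supplies exactly the information needed: each $c_{ij}$ lies in the unit interval, $0 \le c_{ij} \le 1$. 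First I would invoke the elementary fact that $x^2 \le x$ for every $x \in [0,1]$, and apply it to each summand $c_{ij}$ separately.

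Second, I would sum this termwise inequality over all pairs $i,j$ and divide by $N$; since every term satisfies $c_{ij}^2 \le c_{ij}$, the inequality is preserved under summation, yielding
\be
\chiSG(h,N) = \frac{1}{N}\sum_{ij} c_{ij}^2 \le \frac{1}{N}\sum_{ij} c_{ij} = \chiF(h,N)
\ee
for every finite $N$ and every value of $h$. This is precisely the bound asserted. I note that the auxiliary field enters only through the extra term $-h\sum_i \phi_i$ in the Hamiltonian, which merely shifts each local potential $f_i(\phi_i)$ by a linear term; since Lemma~\ref{lemma_FKG} holds for arbitrary analytic $f_i$, its conclusion $0 \le c_{ij} \le 1$ remains valid in the field, and no separate argument is required.

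Finally, to extract the physical conclusion I would pass to the limits in the order prescribed in the definitions~(\ref{chi_ferro}) and~(\ref{chi_SG}). Because the inequality $\chiSG(h,N) \le \chiF(h,N)$ holds pointwise in $(h,N)$, taking first $N \to \infty$ at fixed $h$ and then $h \searrow 0$ preserves it, giving $\chiSG \le \chiF$ for the limiting quantities. Consequently, whenever $\chiF$ is finite the spin glass susceptibility $\chiSG$ must be finite as well. Since a spin glass phase was defined precisely by the coexistence of a finite $\chiF$ with a divergent $\chiSG$, such a phase is impossible. The genuine content of the argument sits entirely in Lemma~\ref{lemma_FKG}: once the two-sided bound $0 \le c_{ij} \le 1$ is in hand, the theorem is a one-line consequence, and the only point deserving attention is that the ordered limits are applied to an inequality that already holds at every $(h,N)$, so that no exchange-of-limits subtlety arises.
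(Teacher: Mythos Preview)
Your proof is correct and follows essentially the same route as the paper: apply Lemma~\ref{lemma_FKG} to obtain $0 \le c_{ij} \le 1$, use the elementary bound $x^2 \le x$ on $[0,1]$ termwise, sum, and then pass to the ordered limits. The only place the paper is slightly more careful is in checking that $Z_N(h)$ still converges once the auxiliary field term $-h\sum_i\phi_i$ is added (so that Lemma~\ref{lemma_FKG} is actually applicable), whereas you absorb the linear shift into $f_i$ and assert that ``no separate argument is required''; this is a minor technical point rather than a gap in the logic.
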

\begin{proof}[Theorem \ref{main_lemma}]
The hypothesis of the present Theorem are the same as those of Lemma \ref{lemma_FKG}, with the only difference that to properly define the susceptibilities we need to add the external auxiliary field term to the original model Hamiltonian.
Then Lemma \ref{lemma_FKG} can be used only if
\be
Z_N(h) = \int_{-\infty}^\infty \prod_{i=1}^N {\rm d}\phi_i\;
e^{-\beta {\cal H}_{N}(\{\phi_i\}) + \beta h \sum_i \phi_i}\;
\ee
exists also for $h > 0$, and this is easy to prove. Indeed $Z_N(0)$ exists (otherwise the Gibbs-Boltzmann measure would be ill-defined) and also the first two derivatives of $Z_N(h)$ with respect to $h$ exist (because $\langle \phi_i \rangle$ and $\langle \phi_i^2 \rangle$ exist): so $Z_N(h)$ can be continued in a neighborhood on $h=0$, that we call $S_0$, and this is enough to take the limit $h \searrow 0$ that is required to define properly the susceptibility. Please note that the region $S_0$ coincide with $\R$ for all the models used in the literature, such as the spherical model and the $\phi^4$ model.

Given that the hypothesis of Lemma \ref{lemma_FKG} are satisfied in $S_0$, we can make use of inequalities in (\ref{eq_lemma_FKG}) and find that
\be
\langle \delta\phi_i \delta\phi_j \rangle^2 \le
\langle \delta\phi_i \delta\phi_j \rangle \quad \Longrightarrow \quad
\chiSG(h,N) \le \chiF(h,N)
\label{main}
\ee
for any value of $N$ and $h \in S_0$.
Even in the thermodynamic limit the inequality holds
\be
\chiSG(h,\infty) \le \chiF(h,\infty)
\ee
and so the spin glass susceptibility can not diverge if the ferromagnetic one stays finite.

In other words, from the definitions given in the previous Section it is clear that if a thermodynamic spin glass phase exists, then for a sufficiently large value of $N$ and a sufficiently small value of $h$ the spin glass susceptibility must be larger than the ferromagnetic one and this would violate the inequality in (\ref{main}). Then we conclude that a thermodynamic spin glass phase does not exists in the model defined in the hypothesis. \qed
\end{proof}

\section{Discussion}

We have shown rigorously that there is no spin glass phase in the scalar soft-spin random-field random-temperature Ginzburg-Landau model with ferromagnetic interactions defined by (\ref{Ham_gen}).
This shows that with two-body interactions and a scalar order
parameter one cannot obtain a genuine spin glass phase at
equilibrium without {\it explicit frustration} in the couplings
(another possibility to frustrate the system is to impose a non-equilibrium value of magnetization, see \cite{KrzakalaRicci10}).

Our proof contradicts the conclusions of some works that used field theoretic arguments
\cite{intermediate,intermediate2,CyranoBrezin,CyranoBrezin2,CyranoBrezin3,MaRudnick,Tarjus02}.
It is yet to be discovered where the problem lies in those approaches.  
One possibility to consider is that the spin glass instability could be an artifact of some truncation in the perturbative expansion. For some of
these works the discrepancy may stem from the use of vectorial soft-spin models instead of scalar ones.
Another possibility, that is related to what was suggested recently in \cite{TarjusRecent}, is that the observed "replica symmetry
breaking" instabilities arise only in disorder averaged quantities and never in the thermodynamic limit of a single instance quantities. These instabilities would then not be equivalent to the divergence of the spin glass susceptibility (which we prove impossible out of the ferromagnetic critical point), but they could instead be connected to some subtle non-self-averaging effects between different realizations of the system. Indeed all the above-mentioned works considered a "replicated" field theory, that is, a field theory averaged over
many realizations of the disorder. The divergences that they observed could hence be coming from strong sample to sample fluctuation. 
The fact that some non-self-averaging is present in the RFIM has been suggested by Parisi and Sourlas \cite{P-S}. They argue that the correlation function, or equivalently the ferromagnetic susceptibility, of the RFIM is non self-averaging in the critical region, and they argue that this was the source of the problems with perturbative expansions. Note, however, that such simple non-self-averaging effects can {\it only}
take place {\it at the ferromagnetic critical point} in any finite
dimensional system. This is a straightforward
consequence of a theorem by Wehr and Aizenman \cite{AW} who proved
that any extensive quantities (such as the ferromagnetic
susceptibility away from the critical point) is self-averaging in finite dimensional systems. In other words, if this effect was the one observed in the field theoretic approaches, it has to be limited to the ferromagnetic critical point
itself.

Finally, it would be very interesting to see if our proof can be generalized further. There are two interesting
counter-examples that seem to put strong limits to such
generalizations. Matsuda and Nishimori (private communication) showed that a random field Ising model
with 3-spins interactions on the
Bethe lattice can have a spin glass phase. And so moving beyond pairwise interacting models seems impossible in full generality.
Moreover Parisi (private communication) provided an interesting example of a
pairwise interacting $n=2$ component vector spin system where the two point connected correlation can be negative even if all couplings are positive.
It is a chain of spins with an external field that smoothly rotates by 180 degrees along the chain, such that the field on the last spin is opposite to field on the first spin.
If the field strength is strong enough, each spin will be mostly aligned along the local field and will thermally fluctuate around this position. 
However, given that the extremal spins are in opposite directions, their thermal fluctuations will be negatively correlated.
This is a very specific configuration which may not happen in typical samples, but its existence implies that the proof strategy presented in this paper cannot be straightforwardly generalized to vector spin models.

\vspace{5mm}

\noindent {\bf Acknowledgment:} We thank G. Parisi, H. Nishimori, F. Toninelli and  for very useful comments and enlightening discussions.

\end{document}